\documentclass[12pt,a4paper]{article}

\usepackage[T1]{fontenc}
\usepackage[utf8]{inputenc}

\usepackage{amsmath, amssymb, amsthm}

\usepackage[margin=1in]{geometry}
\usepackage{setspace}
\usepackage{hyperref}
\usepackage{natbib}

\usepackage{graphicx}
\graphicspath{{Fig/}}
\usepackage{algorithm}
\usepackage{algpseudocode}

\usepackage{enumitem}

\theoremstyle{plain}
\newtheorem{theorem}{Theorem}

\theoremstyle{definition}
\newtheorem{definition}{Definition}

\theoremstyle{remark}

\newcommand{\Grad}{\nabla\!\!\!\!\nabla}
\newcommand{\rd}{\mathrm{d}}

\title{Newton's Algorithm as a Gradient Flow:\\
       A Geometric Framework for Recursive Mixture Estimation}

\author{Bernardo Flores\\[4pt]
        \small Department of Statistics \& Data Sciences,
        University of Texas at Austin\\
        \small \href{mailto:bernardofl@utexas.edu}{bernardofl@utexas.edu}}

\date{}

\begin{document}
\maketitle

\begin{abstract}
Bayesian nonparametric mixture models provide a flexible framework for data analysis but are often hindered by the computational expense of traditional inference methods like MCMC. A fast, recursive algorithm proposed by Newton (2002) offers a practical alternative, yet its formal connection to Bayesian inference and its theoretical properties remain only partially understood. This paper reveals a new geometric interpretation of this class of predictive recursions. We demonstrate that Newton's recursion is a discrete-time approximation of a gradient flow on the space of probability measures governed by the Fisher-Rao geometry, providing the first rigorous dynamical characterisation of this family of estimators. This geometric perspective provides a principled theoretical foundation for studying these recursions: it clarifies their convergence behaviour, situates them within the variational Bayes literature, and yields a systematic basis for generalisation by modifying the underlying geometry and discretisation. In contrast to approaches that construct gradient flows from a prescribed variational objective, this work proceeds in the reverse direction: beginning from an existing recursive estimator and uncovering the variational problem it implicitly solves, it opens a pathway for the systematic analysis and extension of a broad class of sequential Bayesian estimators.
\end{abstract}

\medskip
\noindent\textbf{Keywords:} mixture models, Bayesian nonparametrics, gradient flows

\bigskip
\section{Introduction}
Bayesian nonparametric mixture models offer a powerful and flexible approach for density estimation and clustering in heterogeneous datasets. Despite their theoretical appeal, their practical application is often hampered by significant computational challenges. Standard inference methods, such as Markov Chain Monte Carlo (MCMC), can exhibit poor mixing on the multimodal posterior landscapes characteristic of these models, requiring many iterations to approach stationarity.

As an efficient alternative, \cite{newton2002nonparametric} introduced a recursive algorithm for estimating the mixing measure in a density estimation problem. The algorithm sequentially updates the estimate with each new data point, is computationally fast, easy to implement, and has been shown to be strongly consistent. Its Bayesian properties, however, are not fully characterised: while the update weights can be matched to those of a Pólya urn scheme of a Dirichlet process \citep{ferguson_bayesian_1973}, the finite-sample estimates are not guaranteed to approximate the true posterior.

The conceptual roots of Newton's algorithm lie in the quasi-Bayes sequential procedure of \citet{smith1978quasi}, who proposed a recursive update rule for unsupervised classification in finite mixtures and established its frequentist consistency via stochastic approximation techniques. Newton revisited this idea in the nonparametric setting in \cite{newton1998predictive}, showing that the recursive update step corresponds exactly to the predictive distribution of a Dirichlet process mixture, thereby placing the earlier quasi-Bayes heuristic on a formal nonparametric Bayesian footing. Further work with collaborators, including an extension to missing-data problems \citep{newtonzhang1999}, culminated in the 2002 Sankhyā paper \citep{newton2002nonparametric}, which synthesised these developments and proved strong consistency of the mixing measure estimate under general conditions on the kernel family. Subsequent follow-up work established a firmer theoretical foundation for the algorithm: \citet{martinghosh2008} placed the recursion within the stochastic approximation framework; \citet{tokdarmartin2009} proved almost-sure consistency in the weak topology under weaker regularity conditions; and \citet{martintokdar2009, martintokdar2011} characterised the rate of convergence and extended the approach to semiparametric inference via a predictive recursion marginal likelihood.

These contributions established Newton's algorithm as one of the first concrete instances of predictive recursion: a class of inference procedures in which the posterior is built recursively through one-step-ahead predictive distributions rather than computed from a joint probability model \citep{recursiveHahn,  berti2021models}. In such procedures, the predictive distribution, not the posterior, is taken as the primitive inferential object, anticipating the modern literature on predictive Bayesian inference, \emph{c.f.} \citep{cappellowalker2025, fong2023martingale, kerneldirichlet, berti2025probabilistic}. The question of how faithfully Newton's predictive recursion approximates the exact Bayesian posterior was taken up by \citet{fortini2020}, who showed that the sequence of predictive distributions induced it satisfies the same asymptotic guarantees as the Bayesian posterior without converging to it in finite samples. This precise characterisation motivates the present work, understanding the geometric mechanism driving the recursion is a necessary step toward controlling and improving its Bayesian fidelity.

In this paper, we reveal a deeper geometric structure underlying this classic algorithm. We demonstrate that Newton's recursion is a discrete-time approximation of a time-dependent gradient flow on the space of probability measures, specifically with respect to the Fisher-Rao geometry. This reinterpretation provides a principled perspective from which to understand, analyse, and extend the original algorithm as a variational optimisation problem. By framing the estimation problem as the minimisation of an energy functional on a statistical manifold, we derive a more general family of algorithms by modifying the underlying geometry, the energy functional, or the discretisation scheme.

The remainder of this paper is organised as follows. Section~\ref{sec:mixture} reviews Bayesian mixture models and the properties of Newton's algorithm. Section~\ref{sec:primer} introduces the theory of gradient flows on metric spaces. Section~\ref{sec:gradflow} presents our main theoretical result connecting the algorithm to a Fisher-Rao gradient flow. Section~\ref{sec:newtonsmith} introduces the Newton-Smith framework of extended algorithms. Section~\ref{sec:experiments} provides numerical experiments, followed by a concluding discussion in Section~\ref{sec:conclusion}.

\section{Mixture models and the recursive estimator}\label{sec:mixture}

We consider Bayesian mixture models of the form
\begin{equation}\label{eq:mixture}
    x_{1:n} \sim \int k(x,\theta)\,\mu(\rd \theta); \quad \mu \sim \mathbb{P},
\end{equation}
where $\mathbb{P}$ is a prior over probability measures --- for instance, a Dirichlet Process \citep{ferguson_bayesian_1973}, a nonparametric prior whose finite-dimensional marginals are Dirichlet-distributed and which generates almost-surely discrete random measures --- with mean measure $P$. These models take a simple kernel $k(x,\theta)$ and propagate it across the data space by fitting a collection of local components. Under a discrete mixing measure $\mu$, \eqref{eq:mixture} implies inference on the clustering of units $1,\dots, n$, achieved by introducing latent indicator variables that link each unit $i$ to an atom of $\mu$.

\cite{newton2002nonparametric} introduced an efficient recursive estimator for the mixing measure $\mu$. Starting with an initial guess $\mu_0$ and a squared-summable sequence $(\alpha_n)$ on $(0,1)$, the estimate is updated by
\begin{equation}\label{eq:newton}
    \mu_{n+1}(\rd \theta) = (1-\alpha_n)\mu_n(\rd \theta) + \alpha_n\frac{\displaystyle k(x_{n+1},\theta)\,\mu_n(\rd \theta)}{\int k(x_{n+1},\theta)\,\mu_n(\rd\theta)}.
\end{equation}

Algorithm \eqref{eq:newton} yields a strongly consistent point estimator: as $n\to\infty$, $\mu_n$ converges in probability to the true mixing measure $\mu^*$. Its Bayesian properties, however, are not fully characterised. The $\alpha_n$ weights can be matched to the constants in the Pólya urn for a Dirichlet process or any species sampling model. However, \cite{fortini2020} showed that this does not guarantee that the finite-sample induced estimate $\mu_n$ is close to the posterior (mean) under the DP mixture model \eqref{eq:mixture}.

We can rewrite the recursion \eqref{eq:newton} as
\begin{equation}\label{eq:newton2}
    \mu_{n+1}(\rd \theta) = \mu_n(\rd \theta) + \alpha_n\left[\frac{\displaystyle k(x_{n+1},\theta)\,\mu_n(\rd \theta)}{\int_{\Theta} k(x_{n+1},\theta)\,\mu_n(\rd\theta)}-\mu_n(\rd \theta)\right].
\end{equation}

Equation~\eqref{eq:newton2} starts to look like gradient ascent optimisation for the probability measure $\mu$, with the gradient direction corresponding to the term inside the square brackets. The next section reviews gradient ascents on the space of probability measures, studied through their time-continuous limit as gradient flows.

\section{A Primer on Gradient Flows and Statistical Geometries}\label{sec:primer}

Gradient flows provide a natural framework for describing the continuous-time evolution of systems that decrease an energy functional.

In an Euclidean space \( \Theta\subseteq \mathbb{R}^d \), if \( \mathcal{F}:\Theta \to \mathbb{R} \) is a smooth, convex function, its gradient flow is the path \(\theta(t)\) defined by the ordinary differential equation (ODE):
\begin{equation}\label{eq:gd}
\theta'(t) = -\nabla \mathcal{F}(\theta(t)), \quad \theta(0) = \theta_0.
\end{equation}
Here, \( -\nabla \mathcal{F}(\theta(t)) \) is the direction of steepest descent, and $\theta'(t)=\frac{\rd}{\rd t}\theta(t)$.

A common approach to approximating this continuous path is with the implicit Euler method. This discrete-time rule can be recast as a minimisation problem:
\[
\theta_{n+1} = \underset{\theta \in \Theta}{\mathrm{arg\,min}} \left[ \frac{|\theta - \theta_n|^2}{2\tau} + \mathcal{F}(\theta) \right].
\]

Setting the gradient of the objective to zero recovers the discrete-time approximation of \eqref{eq:gd}. This formulation is known as the minimising movement scheme: it finds a new point $\theta_{n+1}$ minimising $\mathcal{F}(\theta)$ subject to a squared-distance penalty from $\theta_n$.

This scheme generalises naturally. Let $\theta_n^\tau$ denote a discrete time approximation of $\theta(t)$, using time steps of size $\tau$. To define a gradient flow in a generic metric space $( \Theta, d)$, one replaces the Euclidean distance with the metric $d$:
\begin{equation}\label{eq:jko}
\theta_{n+1}^\tau \in \underset{\theta \in \Theta}{\mathrm{arg\,min}} \left[ \frac{d(\theta, \theta_n^\tau)^2}{2\tau} + \mathcal{F}(\theta) \right].
\end{equation}
As the time step \( \tau \to 0 \), this discrete path converges to a continuous trajectory defined as the gradient flow of \( \mathcal{F} \) in the metric space \( (\Theta, d) \) \citep{jordan1998variational}. For a comprehensive treatment, we refer the reader to \citet{ambrosio2005gradient}.

We now consider the special case of the space of probability distributions over $\Theta$, denoted $\mathcal{P}(\Theta)$. Two metrics used to induce the appropriate structures to ensure well-definedness are the 2-Wasserstein and the Fisher-Rao distances.  The direction of steepest descent within the chosen metric is described by a differential operator $\Grad_d$ called affine connection.

The general flow is described by the partial differential equation (PDE) $\partial_t \mu_t =- \Grad_d \mathcal{F}(\mu_t)$. While typically solutions of it pertain only to Lebesgue densities; a classic generalisation in PDE theory called weak solutions can incorporate discrete probability measures. Formally, a weak solution is a measure-valued path $(\mu_t)$ for which, for every smooth test function $\varphi \in C_c^\infty(\Theta)$, the map $t \mapsto \int \varphi \, 
d\mu_t$ is absolutely continuous and satisfies the equation after integration against $\varphi$ \citep{evans2010pde}. To characterise these equations explicitly, we introduce the first variation of a functional $\mathcal{F}$.

\begin{definition}
    Let $\chi$ be a signed measure with $\int \rd \chi = 0$. The first variation of a functional $\mathcal{F}:\mathcal{P}(\mathcal{X})\to\mathbb{R}$ at $\mu$ is the unique function $\frac{\delta \mathcal{F}}{\delta \mu}: \mathcal{X} \to \mathbb{R}$ (defined up to an additive constant) satisfying:
    $$
    \frac{\rd}{\rd \epsilon} \mathcal{F}(\mu + \epsilon \chi) \bigg|_{\epsilon=0} = \int \frac{\delta \mathcal{F}}{\delta \mu}(x) \, \chi(\rd x).
    $$
\end{definition}

We write both $\mu(\rd \theta)$ for the measure and $\mu(\theta)$ for its associated density.

\subsection{Geometries on the Space of Probability Measures}

A particularly amenable family of minimising movement schemes in the space of distributions are those given by the free energy functionals  $\mathcal{F}:\left[\mathcal{P}(\Theta), d\right]\to \mathbb{R}$, which take the form:
\begin{equation}\label{eq:freeE}
    \mathcal{F}(\mu)=\underbrace{\int V(\theta)\,\mu(\rd \theta)}_{\text{Potential}}+ \underbrace{\int \log\mu(\theta)\,\mu(\rd\theta)}_\text{Internal energy}.
\end{equation}
For Bayesian inference, minimising the Kullback-Leibler (KL) divergence $\mathcal{F}(\mu) = \text{KL}(\mu \| \pi_n)$ corresponds to a potential $V(\theta) = -\log \pi_n(\theta)$.

\subsubsection{The Wasserstein Geometry: Moving Mass}
Our first candidate for the metric $d$ is the 2-Wasserstein distance, $W_2$ \citep{villani2009optimal}. In this geometry, the tangent space consists of vector fields that transport mass. The affine connection, denoted $\Grad_W$, can be shown to correspond to the continuity equation \citep{jordan1998variational}:
\begin{equation}\label{eq:gradw}
\partial_t \mu_t = \nabla \cdot \left[\mu_t \nabla \frac{\delta \mathcal{F}}{\delta \mu}\right].
\end{equation}
This equation represents the conservation of probability: as the distribution evolves, no mass is created or destroyed, only transported.

\subsubsection{The Fisher-Rao Geometry: Reshaping Density}
This geometry provides a distinct notion of distance. The Fisher-Rao metric \citep{rao1945information, fishermanifold} measures the distinguishability between distributions. The affine connection $\Grad_{FR}$ is a scalar field representing the relative growth rate \citep{lieroHellinger}:
$$
\partial_t \mu_t = -\mu_t\left[ \frac{\delta \mathcal{F}}{\delta \mu} - \int \frac{\delta \mathcal{F}}{\delta \mu} \,\rd\mu_t \right].
$$
While Wasserstein flows transport mass, Fisher-Rao gradient flows re-weight it.

\subsubsection{The Wasserstein-Fisher-Rao Geometry: A Hybrid Approach}
\label{sec:hk}
The Wasserstein-Fisher-Rao geometry combines these into a single product metric \citep{Liero2018, lieroHellinger, jkosplitting}. The tangent space at a measure $\mu$ decomposes into two orthogonal components: $\mathcal{T}_\mu \mathcal{P}(\mathcal{X}) \cong \mathcal{T}_\mu^{\text{W}} \oplus \mathcal{T}_\mu^{\text{FR}}$. Consequently, the WFR affine connection can be written as the direct sum of the individual gradients:
$$
\Grad_{WFR} \mathcal{F} = \Grad_W \mathcal{F} \oplus \Grad_{FR} \mathcal{F}.
$$
The WFR continuity equation reflects this composite structure and can be shown to be \citep{Liero2018, jkosplitting}:
\begin{equation}\label{eq:contHK}
\partial_t \mu_t + \underbrace{\nabla \cdot \left[\mu_t \Grad_W \mathcal{F} \right]}_{\text{Transport}} = \underbrace{\mu_t \Grad_{FR} \mathcal{F}}_{\text{Reaction}}.
\end{equation}
Here $\Grad_W \mathcal{F} = -\nabla \frac{\delta \mathcal{F}}{\delta \mu}$ and $\Grad_{FR} \mathcal{F} = -\left[\frac{\delta \mathcal{F}}{\delta \mu} - \int \frac{\delta \mathcal{F}}{\delta \mu}\,\rd\mu\right]$ denote the transport velocity and reaction rate, respectively, so that the flow $\partial_t \mu_t = -\Grad_{WFR}\mathcal{F}$ decomposes into a conservative flux and a source term.

\subsection{Implementation of the Gradient Flows}

Our goal is to evolve an initial measure $\mu_0$ towards the target posterior $\pi_n$. Unlike classical MCMC, gradient flows provide a dynamic transformation perspective. The path $(\mu_t)$ is the exact solution to a PDE that minimises the free energy; as $t\to\infty$, $\mu_t$ converges to $\pi_n$.

There are two broad strategies for numerically implementing these flows, distinguished by how the affine connection $\Grad_d \mathcal{F}$ is treated. Analytic methods treat the governing PDEs as purely deterministic. Probabilistic methods instead recognise these PDEs as Fokker--Planck equations describing the density evolution of specific Markov processes, whose invariant measure is exactly the target posterior $\pi_n$.

A canonical example of the probabilistic viewpoint is the Wasserstein gradient flow with $\mathcal{F}(\mu)=\text{KL}(\mu \| \pi_n)$. Equation \eqref{eq:gradw} then characterises a Langevin diffusion, so running an unadjusted Langevin Monte Carlo algorithm is equivalent to solving an infinite-dimensional optimisation problem \citep{jordan1998variational}. We use probabilistic methods throughout this paper.

\subsubsection{Discrete Approximation and Mean-Field Limits}
We approximate the continuous posterior $\pi_n$ using the empirical measure of a system of $N$ particles, $\hat{\mu}_t = \frac{1}{N} \sum_{i=1}^{N} \delta_{\theta_i(t)}$, or more generally a weighted $\hat{\mu}_t=\sum w_t \delta_{\theta_i(t)}$.
While this introduces a discretisation error, McKean-Vlasov theory provides a rigorous justification: as $N \to \infty$, the interactions average out (propagation of chaos \citep{chaintron2021}), and $\hat{\mu}_t$ converges exactly to the solution of the mean-field PDE.

\begin{enumerate}[label=(\alph*)]
    \item \textbf{Wasserstein (Moving Atoms):} Since weights are fixed, the flow must represent regions of high posterior density by physically transporting particles. To represent a heavy mixture component, the flow \textit{clumps} atoms together.
    \item \textbf{Fisher-Rao (Reweighting):} Particle locations are fixed. The flow represents posterior mass by increasing the weights of atoms. This creates a \textit{coverage constraint}: to successfully represent a mode, the system requires particles to be already present in that region.
\end{enumerate}

\subsubsection{The Hybrid Algorithm: Wasserstein-Fisher-Rao}
We employ a splitting scheme that alternates between these dynamics \citep{crucinio2025sequentialmontecarloapproximations}:

\begin{enumerate}[label=(\alph*)]
    \item Fisher-Rao Step (Birth-Death): Update weights of current particles $\{\theta_i(t)\}$ based on the Fisher-Rao gradient:
    $$
    \tilde{w}_i \propto w_i(t) \exp\left[ \tau \Grad_{FR} \mathcal{F}(\theta_i(t)) \right].
    $$
    Then resample $N$ particles $\{\hat{\theta}_i\}$ with replacement. This effectively concentrates mass on high-fitness regions.

    \item Wasserstein Step (McKean-Vlasov): Evolve locations of resampled particles $\{\hat{\theta}_i\}$ using the Euler-Maruyama discretisation:
    $$
    \theta_i(t+\tau) = \hat{\theta}_i - \tau \left[ \nabla V(\hat{\theta}_i)\right] + \sqrt{2\tau} \xi_i,
    $$
    where $\xi_i \sim \mathcal{N}(0, I)$. This step transports particles via potential and interaction forces.
\end{enumerate}

\section{A Gradient Flow Perspective of Newton's Algorithm}\label{sec:gradflow}

We are now ready to recognise Newton's recursion~\eqref{eq:newton} as an instance of a (discretised) gradient flow. We restate recursion \eqref{eq:newton} using unified notation:
$$
    \mu_{n+1}(\rd \theta) = \mu_n(\rd \theta) + \alpha_n\left[\frac{\displaystyle k(x_{n+1},\theta)\,\mu_n(\rd \theta)}{\int_{\Theta} k(x_{n+1},\theta)\,\mu_n(\rd\theta)}-\mu_n(\rd \theta)\right].
$$
This expression has the form of a forward Euler discretisation. By replacing the discrete measure $\mu_n$ with a continuous density $\mu_t(\theta)$ and $\alpha_n$ with a time increment, the limit is given by the PDE:
\begin{equation}\label{eq:contFR}
    \partial_t \mu_{t}(\theta) = \mu_t(\theta)\left[\frac{\displaystyle k(x_{t},\theta)}{\int_{\Theta} k(x_{t},\vartheta)\mu_t(\vartheta)\,\rd\vartheta}-1\right].
\end{equation}
This purely multiplicative equation is characteristic of a Fisher-Rao gradient flow.

\begin{theorem}
Newton's recursion \eqref{eq:newton} is a forward Euler discretisation of the Fisher-Rao gradient flow of the time-dependent functional
\begin{align}\label{eq:newtop}
\mathcal{F}_t(\mu_t\mid x_t) &= -\log \int k(x_t, \theta)\,\mu_t(\rd\theta).
\end{align}
\end{theorem}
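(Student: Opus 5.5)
The plan is a direct computation. First compute the first variation of $\mathcal{F}_t$ in \eqref{eq:newtop}, then insert it into the Fisher--Rao gradient flow equation from Section~\ref{sec:primer}, and finally check that a forward Euler step of the resulting PDE is exactly \eqref{eq:newton}. Throughout, $x_t$ is held fixed while the first variation is taken; the time dependence of the functional enters only through the data point being processed at that step.

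\textbf{First variation.} Write $m(\mu)=\int k(x_t,\theta)\,\mu(\rd\theta)$ and let $\chi$ be a signed measure with $\int\rd\chi=0$. Linearity of $\mu\mapsto m(\mu)$ gives
\begin{equation*}
\frac{\rd}{\rd\epsilon}\Big[-\log m(\mu+\epsilon\chi)\Big]_{\epsilon=0}=-\frac{\int k(x_t,\theta)\,\chi(\rd\theta)}{m(\mu)},
\end{equation*}
so $\frac{\delta\mathcal{F}_t}{\delta\mu}(\theta)=-k(x_t,\theta)/m(\mu)$, up to an additive constant. We also need $\int\frac{\delta\mathcal{F}_t}{\delta\mu}\,\rd\mu=-1$. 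The additive-constant ambiguity cancels in the centred expression $\frac{\delta\mathcal{F}}{\delta\mu}-\int\frac{\delta\mathcal{F}}{\delta\mu}\,\rd\mu$, so the Fisher--Rao velocity is well defined.

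\textbf{Flow equation.} Substituting into $\partial_t\mu_t=-\mu_t\big[\frac{\delta\mathcal{F}}{\delta\mu}-\int\frac{\delta\mathcal{F}}{\delta\mu}\,\rd\mu_t\big]$ gives
\begin{equation*}
\partial_t\mu_t(\theta)=\mu_t(\theta)\left[\frac{k(x_t,\theta)}{\int k(x_t,\vartheta)\,\mu_t(\rd\vartheta)}-1\right].
\end{equation*}
This is exactly \eqref{eq:contFR}. Two sanity checks follow directly. Integrating the right-hand side in $\theta$ gives zero, so mass is conserved. The flow also decreases $\mathcal{F}_t$ for frozen $x_t$: since $\frac{\rd}{\rd s}\mathcal{F}_t(\mu_s)=-\mathrm{Var}_{\mu_s}\!\big(k(x_t,\cdot)/m(\mu_s)\big)\le 0$, it is indeed the Fisher--Rao gradient flow (with the paper's sign convention) of $\mathcal{F}_t$.

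\textbf{Discretisation.} The remaining step is to apply forward Euler with step size $\alpha_n$ on the interval $[t_n,t_{n+1}]$, $t_{n+1}=t_n+\alpha_n$, with the data held piecewise constant, $x_t=x_{n+1}$ on that interval. Then
\begin{equation*}
\mu_{n+1}=\mu_n+\alpha_n\,\mu_n\left[\frac{k(x_{n+1},\cdot)}{\int k(x_{n+1},\vartheta)\,\mu_n(\rd\vartheta)}-1\right],
\end{equation*}
which is \eqref{eq:newton2} and hence \eqref{eq:newton}. It must be stated in the weak form from Section~\ref{sec:primer}, tested against $\varphi\in C_c^\infty(\Theta)$, so that discrete $\mu_n$ are admissible. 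Because the Fisher--Rao vector field is purely multiplicative ($\mu_n$ times a bounded function, provided $k$ is bounded), the forward Euler step is well defined on arbitrary measures with no density required. Positivity is automatic for $\alpha_n\in(0,1)$, since the update is the convex combination \eqref{eq:newton}.

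\textbf{Main obstacle.} The delicate point is making precise what "discretisation of a time-dependent flow" means when the functional is indexed by the data sequence. I would define the time-inhomogeneous functional $\mathcal{F}_t$ through the piecewise-constant interpolation $x_t=x_{n+1}$ for $t\in[t_n,t_{n+1})$, with $t_n=\sum_{j<n}\alpha_j$. This makes the statement an exact identification of one Euler step per datum, which is the claim; convergence of the scheme as $\alpha_n\to0$ is not part of the statement.

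I would also remark that the step size $\alpha_n$ of \eqref{eq:newton} plays the role of $\tau$ in \eqref{eq:jko}. Finally, the multiplicative form of the update keeps the support of $\mu_n$ inside the support of $\mu_0$, consistent with the Fisher--Rao "reweighting" picture.
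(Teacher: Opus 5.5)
Your proposal is correct and follows the paper's proof: you compute the first variation $-k(x_t,\theta)/\int k(x_t,\vartheta)\,\mu(\rd\vartheta)$, use $\int \frac{\delta\mathcal{F}_t}{\delta\mu}\,\rd\mu_t=-1$ to reduce the Fisher--Rao flow to \eqref{eq:contFR}, and read \eqref{eq:newton2} as its Euler step with step size $\alpha_n$. Your additions make precise what the paper leaves implicit without changing the argument: the explicit clock $t_n=\sum_{j<n}\alpha_j$ with piecewise-constant data, the variance identity showing that $\mathcal{F}_t$ decreases, and the weak-form and positivity remarks.
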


\begin{proof}
First, we calculate the first variation $\delta\mathcal{F}$. Let $\chi$ be a signed measure such that $\int\rd \chi=0$. Then,
    \begin{align*}
    \lim_{\epsilon\to 0} \frac{\mathcal{F}_t(\mu+\epsilon \chi\mid x_t) -\mathcal{F}_t(\mu\mid x_t) }{\epsilon}
    &= -\lim_{\epsilon\to 0}\frac{\log\left[\int k(x_t,\theta)\,(\mu+\epsilon\chi)(\rd\theta)\right] - \log\int k(x_t,\theta)\,\mu(\rd\theta)}{\epsilon}\\
    &= - \frac{\int k(x_t,\theta)\,\chi(\rd\theta)}{\int k(x_t,\vartheta)\,\mu(\rd\vartheta)} = -\int \frac{ k(x_t,\theta)}{\int k(x_t,\vartheta)\,\mu(\rd\vartheta)}\,\chi(\rd\theta).
    \end{align*}
Hence, $\frac{\delta \mathcal{F}_t}{\delta \mu}(\theta) = -\frac{k(x_t,\theta)}{\int k(x_t,\vartheta)\,\mu(\rd\vartheta)}$.

The Fisher-Rao gradient flow is given by:
\begin{align*}
\partial_t \mu_t &= \mu_t \Grad_{FR} \mathcal{F}_t(\mu_t) = -\mu_t\left[\frac{\delta\mathcal{F}_t}{\delta \mu}-\int\frac{\delta\mathcal{F}_t}{\delta \mu}\,\rd\mu_t\right]\\
&=-\mu_t\left[\frac{\delta\mathcal{F}_t}{\delta \mu}+\int \frac{k(x_t,\vartheta)}{\int k(x_t,\varphi)\,\rd\mu_t(\varphi)} \,\rd\mu_t(\vartheta)\right] = -\mu_t\left[\frac{\delta\mathcal{F}_t}{\delta \mu}+1\right].
\end{align*}
This matches \eqref{eq:contFR}, identifying Newton's algorithm as the forward Euler discretisation of this flow.
\end{proof}

This theorem shows that \cite{newton2002nonparametric} is a forward Euler discretisation of the Fisher--Rao gradient flow \eqref{eq:contFR}.

To situate this method within a Bayesian context, we turn to the variational formulation of Bayes' rule. The framework of Generalised Variational Inference \citep{knoblauch2022} casts Bayesian inference as an optimisation problem over probability measures. The key tool is the Donsker--Varadhan variational formula, which states that for any measurable function $\Phi$:
\begin{equation}\label{eq:donsker}
    \log \mathbb{E}_P \left[ e^{\Phi(\theta)} \right] = \sup_{q \in \mathcal{P}(\Theta)} \left[ \mathbb{E}_q \{\Phi(\theta)\} - \text{KL}(q \| P) \right].
\end{equation}
Setting $\Phi(\theta) = \sum \log p(x_i | \theta)$ shows that the posterior is the unique maximiser of the right-hand side, or equivalently the minimiser of the free energy functional:
\begin{equation}\label{eq:bayesvar}
q_n^* = \underset{q \in \mathcal{P}(\Theta)}{\mathrm{arg\,min}} \left[ \mathbb{E}_{q} \left\{ -\sum_{i=1}^n \log(p(x_i \mid \theta)) \right\} + \text{KL}(q \| P) \right].
\end{equation}

This addresses the concern raised by \cite{fortini2020}: to make \eqref{eq:newton} approximate the true posterior under a prior $\mathbb{P}$, one augments the functional $\mathcal{F}$ with the discrepancy term from \eqref{eq:bayesvar}.

However, directly adding the KL term to fix Newton's recursion is not an option since, in the case of infinite mixtures, standard divergences -like the KL are often ill-defined between random measure priors, such as the prior $\mathbb{P}$ in \eqref{eq:mixture}. We therefore adopt a generalised Bayesian framework, replacing the KL with the Wasserstein-over-Wasserstein distance $\mathcal{W}_2(\mathbb{P},\mathbb{Q})$ \citep{nguyenborrowing}. We define the discrepancy as:
$$
\mathcal{W}(q\,\|\, \mathbb{P}) :=\mathcal{W}_2(\delta_q,\mathbb{P})^2,
$$
which lifts the deterministic measure $q$ to a Dirac mass in $\mathcal{P}(\mathcal{P}(\mathcal{X}))$. Since $q$ is deterministic, the only possible coupling is the independent copula $\delta_q\otimes \mathbb{P}$, simplifying the distance to:
$$
\mathcal{W}(q\,\|\, \mathbb{P}) =\int W_2^2(q, p)\,\mathbb{P}(\rd p).
$$
This distance was studied in \cite{hugomarta} as a discrepancy between species sampling models \citep{Pitman1996SomeDO}, which includes popular priors such as the Dirichlet and Pitman--Yor processes.

The standard 2-Wasserstein distance $W_2$ provides a robust geometry, but its dual potentials are piecewise linear when the measures are discrete. This lack of strict convexity poses a theoretical challenge for defining unique gradient flows. To remedy this, we replace the inner transport cost with the entropically regularised Wasserstein (Sinkhorn) discrepancy, $W_{2,\varepsilon}^2$. The discrepancy becomes:
$$
\mathcal{W}_\varepsilon(q\,\|\, \mathbb{P}) =\int W_{2,\varepsilon}^2(q, p)\,\mathbb{P}(\rd p).
$$
Since the prior $\mathbb{P}$ is typically a random measure (e.g., a Dirichlet Process), this integral is analytically intractable but can be efficiently estimated via Monte Carlo sampling. We approximate the discrepancy by drawing $M$ i.i.d. realisations $p_k \sim \mathbb{P}$ and computing the empirical average:
$$
\hat{\mathcal{W}}_\varepsilon(q\,\|\, \mathbb{P}) \approx \frac{1}{M} \sum_{k=1}^M W_{2,\varepsilon}^2(q, p_k).
$$
As shown by \cite{cuturi2013sinkhorn} and \cite{feydy2019interpolating}, the entropic term ensures that the corresponding Kantorovich potentials are smooth ($C^\infty$) and strictly convex, guaranteeing the existence and uniqueness of the gradient flow even for discrete particle approximations.

Substituting the Kullback--Leibler divergence with the Sinkhorn-regularised Wasserstein metric places us within the framework of Generalised Variational Inference (GVI) \citep{knoblauch2022, bissiri2016general}. Unlike standard Bayesian inference, which is uniquely characterised by minimisation of the relative entropy, GVI defines the posterior as the solution to a variational problem with an arbitrary loss and regularizer. The measure recovered by our algorithm is therefore strictly a generalised posterior.

This departure is theoretically necessary in our context. As noted by \cite{nguyenborrowing}, standard Bayesian updating is ill-defined for infinite mixing measures, whereas the geometry-aware Sinkhorn distance provides robust regularisation that remains well-posed in infinite-dimensional metric spaces. This leads to a general definition of this class of algorithms.

\begin{definition}[Newton-Smith Recursion]
Let $\mathcal{D}(\cdot, \mathbb{P})$ be a generic discrepancy measure with respect to a prior $\mathbb{P}$, and let $\lambda \geq 0$ be a regularisation parameter. The Newton-Smith recursion is defined as a temporal discretisation of the gradient flow of the regularised functional:
\begin{equation}\label{eq:NQ_functional}
    \mathcal{J}_t(\mu) = \mathcal{F}_t(\mu \mid x_t) + \lambda \mathcal{D}(\mu, \mathbb{P}).
\end{equation}
The dynamics are governed by the chosen geometry $\mathcal{G}$:
\begin{equation}\label{eq:NQ_flow}
    \partial_t \mu_t = -\Grad_\mathcal{G} \mathcal{J}_t(\mu_t).
\end{equation}
\end{definition}

\section{Extending the Algorithm: The Newton-Smith Framework}\label{sec:newtonsmith}

The basic Newton recursion is computationally efficient but geometrically restricted: it evolves weights while keeping support points fixed. It also minimises a functional with no explicit prior mechanism, which is why its connection to Bayesian inference is tenuous. This section extends the algorithm by introducing  prior regularisation and lifting the static support restriction via a hybrid geometry. We term this broader class of updates the Newton-Smith framework.

\subsection{Regularisation via Optimal Transport}

Standard recursive estimation can be ill-posed without proper regularisation. We focus on the case where the functional includes a discrepancy term $\mathcal{D} = \mathcal{W}_\varepsilon$ regularised by $\lambda=1$. This corresponds to a Generalised Variational Inference (GVI) objective where the prior is enforced via the Sinkhorn-over-Wasserstein distance.

To implement this, we require the first variation of the regularisation term. Standard results on the Sinkhorn distance \citep{feydy2019interpolating} identify this variation as the expectation of the optimal dual potentials. Restricting evolution to the weights (a pure Fisher-Rao flow regularised by the Wasserstein distance), the driving force on the $i$-th particle becomes:
$$
\Grad_{FR} \mathcal{J}_t(\theta_i) = -\left[ V_t(\theta_i) - \bar{V}_t \right],
$$
where $V_t(\theta_i) = \frac{\delta\mathcal{F}_t}{\delta \mu}(\theta_i) + \lambda \int f_{\mu_t \to p}(\theta_i) \mathbb{P}(\rd p)$ is the total potential and $\bar{V}_t$ is its expectation under $\mu_t$. This reduces the infinite-dimensional PDE to a system of ODEs for the weights, which we simulate in Algorithm \ref{alg:particle_hk}, enabling uncertainty quantification via bootstrapping.

\subsection{The Wasserstein-Fisher-Rao Splitting Scheme}

Our second extension lifts the static support restriction by moving from the Fisher-Rao geometry to the Wasserstein-Fisher-Rao (HK) geometry. As detailed in Section \ref{sec:hk}, the gradient flow in this space decomposes into orthogonal \textit{reaction} (weight update) and \textit{transport} (location update) components, allowing the mixture components to evolve jointly with their weights.

To simulate this flow, we employ a splitting scheme that alternates between these two dynamics at every step. Given the current measure $\mu_n = \sum_{j=1}^N w_j(n) \delta_{\theta_j(n)}$, the update to $\mu_{n+1}$ proceeds in two stages:

\begin{enumerate}[label=(\alph*)]
    \item \textbf{Reaction (Fisher-Rao Step):} We first update the weights based on the fitness of each atom. This is an Euler exponential update driven by the potential $V_{i,n}$. To physically realise the birth-death dynamic essential for mode discovery, we then \textbf{resample} the particles proportional to these new weights. This crucial step duplicates high-fitness particles and discards low-fitness ones.

    \item \textbf{Transport (Wasserstein Step):} We then update the locations of the resampled atoms using the Wasserstein affine connection. This moves particles locally towards the likelihood maxima, allowing the support to adapt to the data stream.
\end{enumerate}

This procedure is summarised in Algorithm \ref{alg:particle_hk}. By combining these steps, the algorithm enjoys the benefits of both approaches: the reweighting step can jump between modes, while the gradient descent step provides local spatial refinement.

\begin{algorithm}[ht]
\caption{Particle Wasserstein-Fisher-Rao Flow }
\label{alg:particle_hk}
\begin{algorithmic}[1]
\State \textbf{Input:} Random Prior $\mathbb{P}$, data stream $(x_t)$.
\State \textbf{Parameters:} Bootstrap replicates $B$, particles $N$, Monte Carlo prior samples $M$, Sinkhorn regularisation $\varepsilon$, flow regularisation $\lambda$.

\For{$b = 1$ to $B$}
    \State \textbf{Initialise:} Draw $\{\theta_i^{(b)}\}_{i=1}^N \sim P$; set weights $w_i^{(b)}(0) = 1/N$.

    \For{$k = 0$ to $K-1$}
        \State \textbf{1. Likelihood Force:}
        \State Compute normalised likelihood variation:
        $$
        g_i^{(b)} = -\frac{k(x_{t_k}, \theta_i^{(b)})}{\sum_{j=1}^N w_j^{(b)}(t_k) k(x_{t_k}, \theta_j^{(b)})}
        $$

        \State \textbf{2. Prior Force (Monte Carlo Sinkhorn):}
        \State Sample $M$ realisations from the prior: $\{p_m\}_{m=1}^M \sim \mathbb{P}$.
        \For{$m = 1$ to $M$}
            \State Compute dual Sinkhorn potential $f_{\mu^{(b)} \to p_m}$ between current measure $\mu_{t_k}^{(b)}$ and $p_m$.
        \EndFor
        \State Average potentials to estimate prior variation:
        $$
        h_i^{(b)} = \frac{1}{M} \sum_{m=1}^M f_{\mu^{(b)} \to p_m}(\theta_i^{(b)})
        $$

        \State \textbf{3. Update:}
        \State Total Potential: $V_i^{(b)} = g_i^{(b)} + \lambda h_i^{(b)}$.
        \State Mean Potential: $\bar{V}^{(b)} = \sum_{j=1}^N w_j^{(b)}(t_k) V_j^{(b)}$.

        \State Evolve weights (Fisher-Rao step):
        $$
        \tilde{w}_i^{(b)}(t_{k+1}) = w_i^{(b)}(t_k) \exp\left[ -\Delta t_k \left\{ V_i^{(b)} - \bar{V}^{(b)} \right\} \right]
        $$
        \State Normalise weights: $w^{(b)}(t_{k+1}) \leftarrow \tilde{w}^{(b)}(t_{k+1}) / \sum \tilde{w}^{(b)}(t_{k+1})$.
    \EndFor
    \State Store final measure $\hat{\mu}^{(b)}$.
\EndFor
\end{algorithmic}
\end{algorithm}

Newton-Smith's functional is grounded in the marginalised log-likelihood, operating on a single observation at a time. Substituting the complete log-likelihood recovers a regularised variant of the algorithm in \cite{Yan2023LearningGM}. Neither formulation holds an unambiguous theoretical advantage in general; however, the sequential structure of our approach may make it better suited to large-scale data regimes where batch processing is computationally prohibitive.

\subsection*{Connection to Post-Bayesian Inference}

Before establishing conditions under which the Newton--Smith recursions converge, 
we must address the required properties of the data stream $(x_t)_{t \leq N}$. 
Since $(x_t)$ is indexed by time, convergence of the gradient flow of functional 
\eqref{eq:newtop} depends on the ability to extend this stream to infinity. 
Recall that we embed the discrete sequence $(x_n)_{n=1}^N$ into a continuous-time 
process via step-function interpolation, setting $x_t := x_n$ whenever 
$n - 1 \leq t < n$.

When $N$ is large, finite-horizon effects are negligible and convergence follows 
from standard arguments. The more delicate regime is small $N$, where extension 
of the data stream requires care. The naive extension $x_t = x_N$ for all 
$t \geq N$ is always available, but introduces a spurious stationarity that can 
impair mixing by breaking the exchangeability of the stream. A more principled 
alternative is the Bayesian bootstrap \citep{rubin1981bayesian}, which draws 
$x_{N+k}$, $k > 0$, from a Dirichlet process centred at the empirical 
distribution $\hat{p}(x_1, \dots, x_N)$. Under this construction, we recover the 
method of \citet{fuheng} as a forward Euler discretisation of the 
Wasserstein--Fisher--Rao gradient flow, establishing a direct connection between 
our recursion and that framework.

A second structural observation concerns the prior regularisation term $\mathcal{D}(\mu, \mathbb{P})$, which represents the main computational bottleneck of our approach. By the Bernstein--von Mises phenomenon \citep{vaart1998asymptotics}, when the posterior is consistent, the likelihood term asymptotically dominates the prior as $N \to \infty$, rendering the choice of regularisation asymptotically immaterial. This has a natural interpretation in stochastic optimal control theory, where KL-type regularisation terms shape the trajectories of controlled diffusion processes \citep{wang2020continuous, kappen2005path}. In that setting, it is standard to anneal the regularisation strength via a decreasing schedule $\lambda_t \searrow 0$, yielding the time-dependent objective
\begin{equation}
    \mathcal{J}_t(\mu) \;=\; \mathcal{F}_t(\mu \mid x_t) 
    + \lambda_t \,\mathcal{D}(\mu, \mathbb{P}).
    \label{eq:annealed}
\end{equation}

Such scheduling progressively deactivates the prior as evidence accumulates, mirroring the Bayesian asymptotic regime while preserving stability in finite samples. The rate at which $\lambda_t \to 0$ governs the exploration--exploitation trade-off: too slow a decay retains unnecessary regularisation bias; too fast a decay forfeits the stabilising effect of the prior before sufficient data have been observed. This is directly analogous to temperature scheduling in simulated annealing \citep{geman1984stochastic}, where the schedule $\lambda_t \sim C/\log t$ is the slowest rate that guarantees convergence to the global optimum \citep{holley1988simulated}. Whether such schedules are necessary in our setting, or whether faster rates are permissible under stronger conditions on the likelihood geometry, remains an open question.

Finally, we situate our approach within the broader landscape of post-Bayesian inference \citep{bissiri2016general, knoblauch2022}. To our knowledge, \citet{Yan2023LearningGM} were the first to propose gradient flows for mixture model estimation. Their formulation targets the full log-likelihood $n^{-1}\sum_{i=1}^n \mathcal{F}_n(\theta)$ as the driving functional, in contrast to the sequential marginal likelihood underlying our approach. Supplementing their objective with the regularisation term $\mathcal{D}(\mu, \mathbb{P})$ recovers a gradient flow targeting the Bayesian posterior, bridging both frameworks. 

The question of which formulation to prefer is a fundamental one in post-Bayesian inference \citep{knoblauch2022}. Our sequential, marginal-likelihood approach adapts dynamically and accommodates streaming data naturally. The static full-likelihood alternative may offer greater stability in small-$N$ regimes where the prior exerts a stronger influence. We investigate this trade-off empirically in Section~\ref{sec:experiments}.

\section{Theoretical Properties}\label{sec:theory}

We now study the convergence guarantees of the generalised Newton-Smith recursions. The central challenge is that the Newton-Smith operator \eqref{eq:newtop} is time-dependent, rendering the flow equations non-autonomous. We apply the framework of \cite{FerreiraValenciaGuevara2018} to establish convergence of Algorithm~\ref{alg:particle_hk}. This requires conditions on the first variation of the functional (which depends on the kernel) and on the mean measure of the prior.

The most important condition is $\lambda(t)$-convexity. A function $V(t,\theta)$ is $\lambda$-convex for each $t$ if $V(t, \theta)-\frac{\lambda(t)}{2}\theta^2$ is convex for some locally bounded, positive function $\lambda$. Geometrically, this requires log-likelihoods to have curvature at least as strong as a parabola.

The conditions found in \cite{FerreiraValenciaGuevara2018} applied to our functional \eqref{eq:NQ_functional} $\mathcal{J}_t(\theta)$ are as follows. The functional must be $\lambda(t)$-convex, and the minimal norm subgradient $|\partial_\theta \mathcal{J}_t(\theta)|$ must be locally bounded with $t \mapsto \mathcal{J}_t(\theta)$ locally bounded from below. There must also exist a positive, locally $L_1$ function $\beta(t)$ such that
$$
|\mathcal{J}_t(\theta) - \mathcal{J}_s(\theta)| \leq \int_{s}^{t} \beta(r) \, \rd r \left[1 + \|\theta\|^2\right], \quad 0 \leq s < t,
$$
and a Lebesgue density $\rho(\theta)$ with finite second moment satisfying $\int \mathcal{J}_0(\theta)\, \rho(\theta)\,\rd\theta < +\infty$.

Note that \cite{FerreiraValenciaGuevara2018} apply constraints directly to the score function, the gradient of the log-likelihood. As this gradient serves as the first variation of the KL divergence that governs Langevin dynamics, their technique applies to our framework by substituting the first variation of our own functional. This motivates the subsequent theorem.

\subsection*{Regularity Assumptions}

We make the following standard assumptions on the domain, the prior measure, and the data stream. These follow the regularity theory of Sinkhorn potentials \citep{feydy2019interpolating} and gradient flows \citep{FerreiraValenciaGuevara2018}.

\begin{enumerate}[label=(\alph*)]
\item \textbf{Compact \& Convex Domain.} The parameter space $\Theta \subset \mathbb{R}^d$ is a compact, strictly convex set.

\item \textbf{Bounded Kernel.} The mixture kernel $k(x,\theta)$ belongs to a minimal exponential family with $C^2$ smooth natural parameters.

\item \textbf{Locally Integrable Data Stream.} The sufficient statistics $T(x_t)$ of the data stream possess a locally integrable envelope: for any finite time interval $[0, T]$, $\int_0^T \|T(x_t)\| \, \rd t < \infty$. For discrete data streams or bounded data domains, this condition holds trivially. Note there is no problem with randomness, hence bootstrapping is also covered.
\end{enumerate}
\begin{theorem}
    The Newton-Smith (Wasserstein) gradient flow governing the dynamics of \eqref{eq:NQ_functional} has a unique weak solution if the mixture kernel $k(x, \theta)$ belongs to the exponential family, the parameter space $\Theta$ satisfies the Regularity Assumptions, and the data stream admits a locally integrable envelope.
\end{theorem}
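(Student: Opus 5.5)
The plan is to verify, one by one, the hypotheses of the existence and uniqueness theorem of \cite{FerreiraValenciaGuevara2018} for the potential
\[
V(t,\theta)=\frac{\delta\mathcal{J}_t}{\delta\mu}(\theta)=\frac{\delta\mathcal{F}_t}{\delta\mu}(\theta)+\lambda\int f_{\mu_t\to p}(\theta)\,\mathbb{P}(\rd p).
\]
This is the object that plays the role of the score in their Langevin setting. Once the hypotheses hold, their theorem yields a unique weak solution of the non-autonomous continuity equation \eqref{eq:gradw} with this potential. We treat the two summands of $V$ separately, and take the initial density $\rho$ to be the uniform density on $\Theta$.

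\textbf{Step 1: the likelihood term.} By the proof of the first Theorem, $\frac{\delta\mathcal{F}_t}{\delta\mu}(\theta)=-k(x_t,\theta)/\int k(x_t,\vartheta)\,\mu(\rd\vartheta)$. Under assumption (b), write $k(x,\theta)=h(x)\exp\{\eta(\theta)^\top T(x)-A(\eta(\theta))\}$ with $\eta,A\in C^2$. On the compact set $\Theta$ from assumption (a), $\eta$, $A$ and their first two derivatives are bounded. Hence the gradient and Hessian of $\theta\mapsto k(x_t,\theta)$ are bounded by $C(1+\|T(x_t)\|)^2e^{C\|T(x_t)\|}$.

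The normalising integral must be bounded below. For this we use $\int k\,\rd\mu\ge \min_\Theta k(x_t,\cdot)>0$, which follows from continuity and compactness. The Hessian bound then gives $\lambda(t)$-convexity with $\lambda(t)=-C_t$, since on a compact set any $C^2$ function is semiconvex. Local boundedness of the minimal-norm subgradient follows from the gradient bound. Time regularity comes from assumption (c): $x_t$ is piecewise constant, so $|\mathcal{J}_t-\mathcal{J}_s|$ is controlled by $\int_s^t\beta(r)\,\rd r$ with $\beta(r)\lesssim 1+\|T(x_r)\|$, and this $\beta$ is locally $L^1$. Lower boundedness and $\int\mathcal{J}_0\rho<\infty$ are immediate from continuity on a compact set. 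The factor $1+\|\theta\|^2$ is harmless because $\Theta$ is bounded.

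\textbf{Step 2: the Sinkhorn prior term.} Following \cite{feydy2019interpolating}, the Sinkhorn potentials $f_{\mu\to p}$ on a compact domain with cost $\|\theta-\theta'\|^2$ are $C^\infty$. Their derivatives up to order two are bounded uniformly in $\mu$ and $p$ by constants depending only on $\operatorname{diam}\Theta$ and $\varepsilon$. Integrating against $\mathbb{P}$, using dominated convergence, preserves these bounds. So this term contributes a uniform semiconvexity constant and a bounded gradient.

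Time dependence enters only through $\mu_t$. Lipschitz dependence of the potentials on $\mu$ in $W_2$ can be combined with a bound on the metric speed of $\mu_t$, which is itself bounded by the gradient bounds of Step 1. This gives a contribution to $\beta$ that is again locally integrable. Strict convexity of $\Theta$ is used here to guarantee the Sinkhorn and transport regularity near the boundary. It is also used to make sense of the flow on $\Theta$ with no-flux boundary conditions.

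\textbf{Main obstacle.} The framework of \cite{FerreiraValenciaGuevara2018} treats a potential that depends on $t$ and $\theta$ only. Our prior term depends on the solution $\mu_t$ itself, so the equation is of McKean--Vlasov type. I would handle this by a fixed-point argument. Freeze a curve $\nu\in C([0,T];\mathcal{P}_2(\Theta))$, solve the linear non-autonomous problem with $V^\nu$ using Steps 1--2, and show that the map $\nu\mapsto\mu^\nu$ is a contraction for small $T$. The contraction estimate would come from the Lipschitz stability of Sinkhorn potentials together with a Grönwall estimate in $W_2$. The uniform constants then let us iterate over $[0,T]$ and obtain global existence and uniqueness.

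The most delicate point is the Lipschitz stability estimate for the potentials with respect to $\mu$, uniformly over $p\sim\mathbb{P}$. If it fails in full generality, I would restrict to the Monte Carlo approximation $\hat{\mathcal{W}}_\varepsilon$, where $M$ is finite and the estimate follows from the smooth dependence of the Sinkhorn fixed point on its marginals.
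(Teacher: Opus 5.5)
Your strategy matches the paper's: check the hypotheses of \cite{FerreiraValenciaGuevara2018} for the first variation $V_t$, treating the likelihood part and the Sinkhorn part separately. You depart from it at the two points where the paper's argument is weakest, and in both places your version is the one that holds up.

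\emph{The Sinkhorn term.} The paper asserts that every Sinkhorn potential is strictly convex with $\nabla^2 f_{\mu\to p}\succeq C_P I$ uniformly in $\mu$. This is what produces $\lambda(t)=\lambda C_P-K_t$ and the remark about strong convexity for large $\lambda$. You claim only two-sided Hessian bounds depending on $D=\operatorname{diam}\Theta$ and $\varepsilon$. For the cost $\|\theta-y\|^2$ one has $\nabla^2 f_{\mu\to p}(\theta)=2I-\tfrac{4}{\varepsilon}\mathrm{Cov}_{\pi_\theta}(Y)$, where $\pi_\theta(\rd y)\propto e^{(g(y)-\|\theta-y\|^2)/\varepsilon}p(\rd y)$. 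So the Hessian lies between $(2-D^2/\varepsilon)I$ and $2I$ uniformly in $\mu$ and $p$. It is not positive in general: for $p=\tfrac12(\delta_{-1}+\delta_1)$ and symmetric $\mu$, $f''(0)=2-4/\varepsilon<0$ once $\varepsilon<2$. Semiconvexity is what is actually true, and it is all your argument needs.

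\emph{The dependence on $\mu_t$.} The conditions the paper checks are those for a potential energy $\int V(t,\theta)\,\mu(\rd\theta)$ with $V$ independent of $\mu$. Since $\mathcal{J}_t$ is nonlinear in $\mu$, its first variation depends on $\mu_t$ and the equation is of mean-field type. Making the convexity constant uniform in $\mu_t$, which is all the paper does, does not bring it within that framework. Your freeze-and-contract argument is exactly the missing step. It gives up a metric (EVI-type) characterisation of the solution as a gradient flow of $\mathcal{J}_t$, but the theorem only claims a unique weak solution.

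The stability estimate you call delicate is standard on a compact domain. The Sinkhorn map is a strict contraction in Hilbert's projective metric with rate depending only on $D^2/\varepsilon$. Hence the potentials, and via $\nabla f_{\mu\to p}(\theta)=2\,\mathbb{E}_{\pi_\theta}[\theta-Y]$ also their gradients, are Lipschitz in $W_1(\mu,\mu')$ uniformly in $p$. The finite-$M$ fallback is therefore unnecessary. It would not have helped anyway, because the difficulty lies in the dependence on $\mu$, not on $p$.

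Two repairs remain.

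\emph{The normaliser.} $\int k(x_t,\vartheta)\,\mu_t(\rd\vartheta)$ in the likelihood variation also depends on the solution, so it must be frozen together with the prior term. This costs nothing: it is bounded below by $\min_\Theta k(x_t,\cdot)$, as you note, and it is $W_1$-Lipschitz in $\mu$.

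\emph{Time regularity.} This step is more serious, because your justification is self-defeating. Precisely because $x_t$ is piecewise constant, $V_t$ jumps at the integer times. No bound of the form $\int_s^t\beta(r)\,\rd r\,[1+\|\theta\|^2]$ with $\beta\in L^1_{\mathrm{loc}}$ can hold across a jump, since the left-hand side stays of order one as $t-s\to0$. The paper's Step 3 has the same defect. In your framework it is easy to avoid, in either of two ways:
\begin{itemize}
\item restart the flow at each integer time, since on $[n-1,n)$ the data term is constant in $t$; or
\item note that the frozen linear problem only needs measurability in $t$ and a locally bounded Lipschitz constant in $\theta$. Use the projected (normal-cone) dynamics on the convex set $\Theta$; its monotonicity leaves your synchronous-coupling Gr\"onwall estimate intact.
\end{itemize}
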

\begin{proof}
    We verify that the first variation of $\mathcal{J}_t(\mu)$ satisfies the regularity and displacement convexity conditions of \cite{FerreiraValenciaGuevara2018}.
    
    The functional $\mathcal{J}_t(\mu) = \mathcal{F}_t(\mu \mid x_t) + \lambda \mathcal{W}_\varepsilon(\mu \parallel \mathbb{P})$ has first variation---the potential $V_t(\theta)$ driving the flow---given by:
    $$
    V_t(\theta) = \frac{\delta \mathcal{J}_t}{\delta \mu}(\theta) = \underbrace{-\frac{k(x_t,\theta)}{\int k(x_t,\vartheta)\,\mu_t(\rd \vartheta)}}_{\text{Likelihood variation } V^{\mathcal{F}}_t(\theta)} + \underbrace{\lambda \int f_{\mu_t \to p}(\theta) \, \mathbb{P}(\rd p)}_{\text{Sinkhorn variation } V^{\mathcal{W}}_t(\theta)}
    $$
    where $f_{\mu_t \to p}(\theta)$ is the optimal dual Sinkhorn potential transporting the current estimate $\mu_t$ to a realisation $p$ of the prior $\mathbb{P}$.

    \paragraph{Step 1: $\lambda$-Convexity of the Potential.}
    We analyse the curvature of $V_t(\theta)$ by bounding the Hessians of its two components.
    
    For the prior variation $V^{\mathcal{W}}_t$, we rely on the properties of entropic optimal transport. As proven in \cite{feydy2019interpolating} (Proposition 1 and Theorem 3), the Sinkhorn potentials $f_{\mu_t \to p}(\theta)$ are $C^\infty$ smooth and strictly convex for any $\varepsilon > 0$. Specifically, their Hessian is strictly positive definite everywhere on the compact domain $\Theta$. Since averaging preserves positive definiteness, the expectation over the prior $\mathbb{P}$ inherits this property. Crucially, the bound is uniform in $\mu_t$: the space $\mathcal{P}(\Theta)$ is compact in the weak topology (since $\Theta$ is compact), and the map $\mu \mapsto \nabla^2 f_{\mu \to p}(\theta)$ is continuous in $\mu$ by stability of entropic optimal transport \citep{feydy2019interpolating}. A continuous, strictly positive-definite-valued function on a compact set attains a uniform lower bound. Thus, there exists a constant $C_P > 0$, independent of $\mu_t$, such that:
    $$
    \nabla^2 V^{\mathcal{W}}_t(\theta) = \lambda \int \nabla^2 f_{\mu_t \to p}(\theta) \, \mathbb{P}(\rd p) \succeq \lambda C_P I.
    $$

    For the likelihood variation $V^{\mathcal{F}}_t$, assume the kernel $k(x,\theta)$ is a member of the exponential family. By the properties of exponential families \cite{brown1986fundamentals}, the density $k(x_t, \theta)$ is $C^2$ smooth with respect to $\theta$. Because $\Theta$ is compact, the Hessian of the likelihood variation is continuous and uniformly bounded. Therefore, its eigenvalues are bounded from below by some time-dependent constant $-K_t \le 0$:
    $$
    \nabla^2 V^{\mathcal{F}}_t(\theta) \succeq -K_t I.
    $$
    
    Summing the two components, the total potential $V_t(\theta)$ satisfies:
    $$
    \nabla^2 V_t(\theta) \succeq (\lambda C_P - K_t) I.
    $$
    This confirms that the potential $V_t(\theta)$ is $\lambda(t)$-convex with modulus $\lambda(t) = \lambda C_P - K_t$. Note that for sufficiently large regularisation $\lambda$, the functional becomes strongly convex.
    
    \paragraph{Step 2: Subgradient Boundedness and Coercivity.}
    Because $\Theta$ is compact and $V_t(\theta)$ is smooth (inheriting $C^2$ smoothness from the exponential family and $C^\infty$ from the Sinkhorn potentials), the minimal norm subgradient $|\partial_\theta V_t(\theta)|$ is strictly bounded on $\Theta$. Furthermore, continuity on a compact set ensures $V_t(\theta)$ is bounded from below for any fixed $t$.

    \paragraph{Step 3: Time Regularity.}
    The time-dependence of the potential enters strictly through the data stream $x_t$ in the likelihood term. Assuming the sufficient statistics $T(x_t)$ possess a locally $L_1$ integrable envelope $\beta(t)$, the variation of the potential is bounded:
    $$
    |V_t(\theta) - V_s(\theta)| \leq \int_{s}^{t} \beta(r) \, \rd r \left[1 + \|\theta\|^2\right], \quad \text{for } 0 \leq s < t.
    $$
    
    With $\lambda(t)$-convexity, bounded gradients, and time-regularity verified, the conditions of \cite{FerreiraValenciaGuevara2018} are satisfied, guaranteeing that the non-autonomous Wasserstein gradient flow \eqref{eq:NQ_flow} admits a unique weak solution.
\end{proof}

While the compactness assumption may appear restrictive, in practice one can always truncate the support of any parameter to a sufficiently large bounded set. A further restriction is that we considered only continuous members of the exponential family, which is standard for mixture modelling since mixing discrete distributions tends to produce misspecified models \citep{miller2014inconsistency}. Finally, we note that the analysis here covers only the Wasserstein gradient flow, without incorporating the Fisher-Rao dynamics. While the full Wasserstein-Fisher-Rao geometry underpins our practical splitting scheme, extending rigorous convergence and consistency guarantees to this hybrid space presents substantial mathematical challenges. The pure Wasserstein space is endowed with a mature differential calculus, allowing us to use established tools from optimal transport to rigorously analyse the spatial evolution of the atoms. In contrast, the Fisher-Rao component introduces non-conservative birth-death processes. Analysing the interacting particle system under these coupled dynamics requires working on a cone over the Wasserstein space, demanding specialised techniques that fall outside the scope of this theoretical exposition. Nevertheless, analysing the pure Wasserstein flow captures the most critical theoretical aspect of our mixture model: how the gradient of the free energy successfully transports particles toward the true posterior. The Fisher-Rao updates, while practically essential for rapidly navigating multimodal landscapes and re-allocating mixture weights, can be viewed theoretically as an auxiliary selection mechanism that accelerates the spatial convergence guaranteed by the underlying Wasserstein diffusion.

\section{Numerical Experiments}\label{sec:experiments}

We evaluate the Newton-Smith flows through two experiments implemented in the accompanying codebase.\footnote{\url{https://github.com/BernardoFL/RecursiveMixtures}} The experiments are designed in sequence: the first establishes which geometry matters for multimodal recovery; the second, having identified the WFR flow as the strongest performer, asks whether two implementation choices that the geometry itself leaves open are worth making.

To isolate the effect of the update geometry, we use a target that is deliberately hard: a seven-component Gaussian mixture arranged in the silhouette of a paw. A dataset of $n = 1000$ observations is drawn i.i.d. from this mixture, and all three flows are initialised from the same particle measure drawn from a Pitman--Yor prior $\mathrm{PY}(d, \theta, G_0)$ \citep{pitmanyor1997} --- a two-parameter extension of the Dirichlet process in which discount $d \in [0,1)$ controls the rate of new cluster formation and concentration $\theta > -d$ governs the total mass --- with $d = 0.2$, $\theta = 10.0$, $G_0 = \mathcal{N}(0, 25 I)$, and $N = 50$ particles. Each flow consumes the $n$ observations sequentially with no bootstrapping, so any difference in the final configurations is attributable solely to the update rule. Neither the Fisher-Rao flow nor the base Newton recursion moves atoms; both are permanently constrained to the initial support. The WFR flow breaks this constraint by coupling a Fisher-Rao reweighting step with a Sinkhorn-regularised Wasserstein transport step ($\varepsilon = 0.05$, $\rho = 0.03$, $\lambda = 0.05$, $25$ inner iterations, prior flow weight $0.1$).

Figure~\ref{fig:flow_comparison} shows the outcome. Both gradient-flow samplers outperform the Newton recursion, which explores the density landscape poorly. The Wasserstein transport term is the decisive factor: by allowing atoms to physically relocate toward regions of high density, the WFR flow recovers the full paw structure that the weight-only methods cannot reach from the initial diffuse support.

\begin{figure}[htbp]
    \centering
    \includegraphics[width=\textwidth]{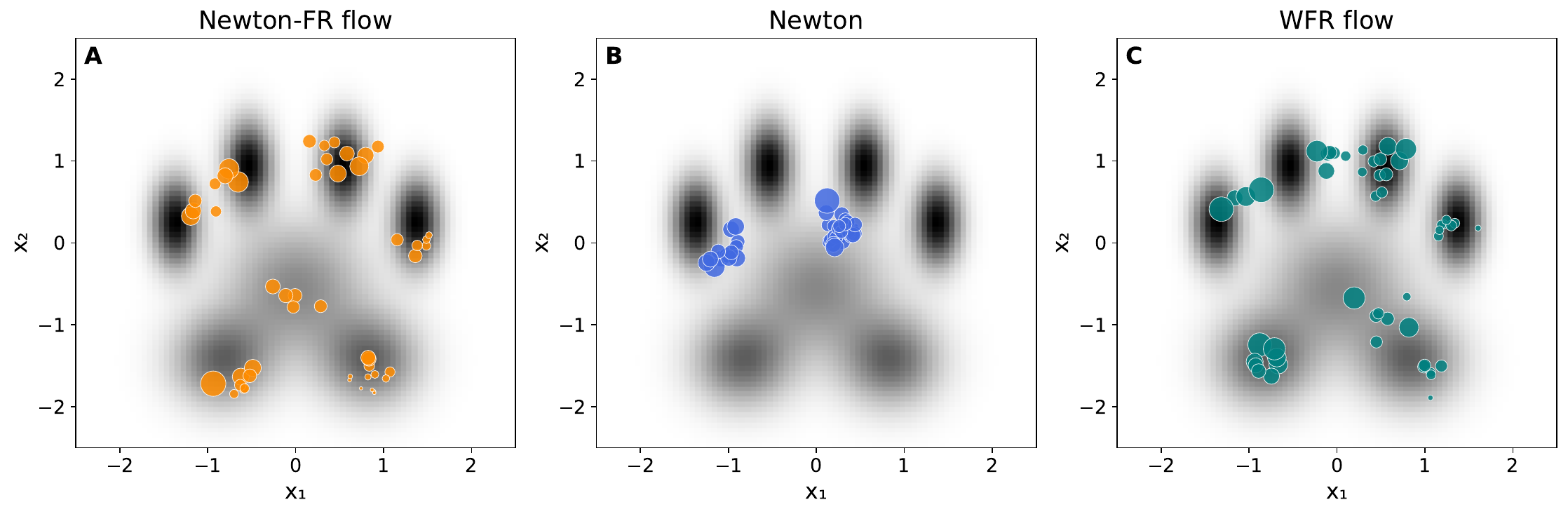}
    \caption{%
        Flow comparison on the seven-component cat-paw Gaussian mixture.
        All three panels share the same initialisation and data.
        Background: true target density (grayscale heatmap, darker $=$
        higher density). Small markers: observed data scatter. Large
        markers: final particle locations with area proportional to weight.
        Left: FR flow. Centre: Newton recursion. Right: WFR flow.
    }
    \label{fig:flow_comparison}
\end{figure}

Having established that the WFR geometry dominates, we ask whether two degrees of freedom that the geometry itself leaves undetermined are consequential in practice. Both comparisons use a four-component Gaussian mixture target, with the WFR flow initialised from $\mathrm{PY}(0.2, 10.0, \mathcal{N}(0, 9))$ at $N = 50$ particles and evaluated across $n \in \{100, 200, \ldots, 1000\}$. For each arm and sample size, a Bayesian bootstrap replicate is formed by drawing Dirichlet weights over the $n$ data indices and constructing a resampled stream of the same length.

The first question is whether extending the flow beyond a single data pass via index continuation improves the result. A truncated arm halts after consuming the $n$ bootstrap observations exactly once; a continuation arm draws further steps via a Bayesian bootstrap for $\lceil 1.5n \rceil$ total steps. We performed 100 repeated simulations and show in Figure~\ref{fig:nxw2} the mean and 90\% CI bands of the Wasserstein distance between the final particle measures and the true distribution for both arms in logarithmic scale.  We also show in Figure~\ref{fig:bootstrap} two examples of these simulations with $n=100, 1000$, which show how adding the bootstrapped samples helps convergence.

\begin{figure}[htbp]
    \centering
    \includegraphics[width=0.7\linewidth]{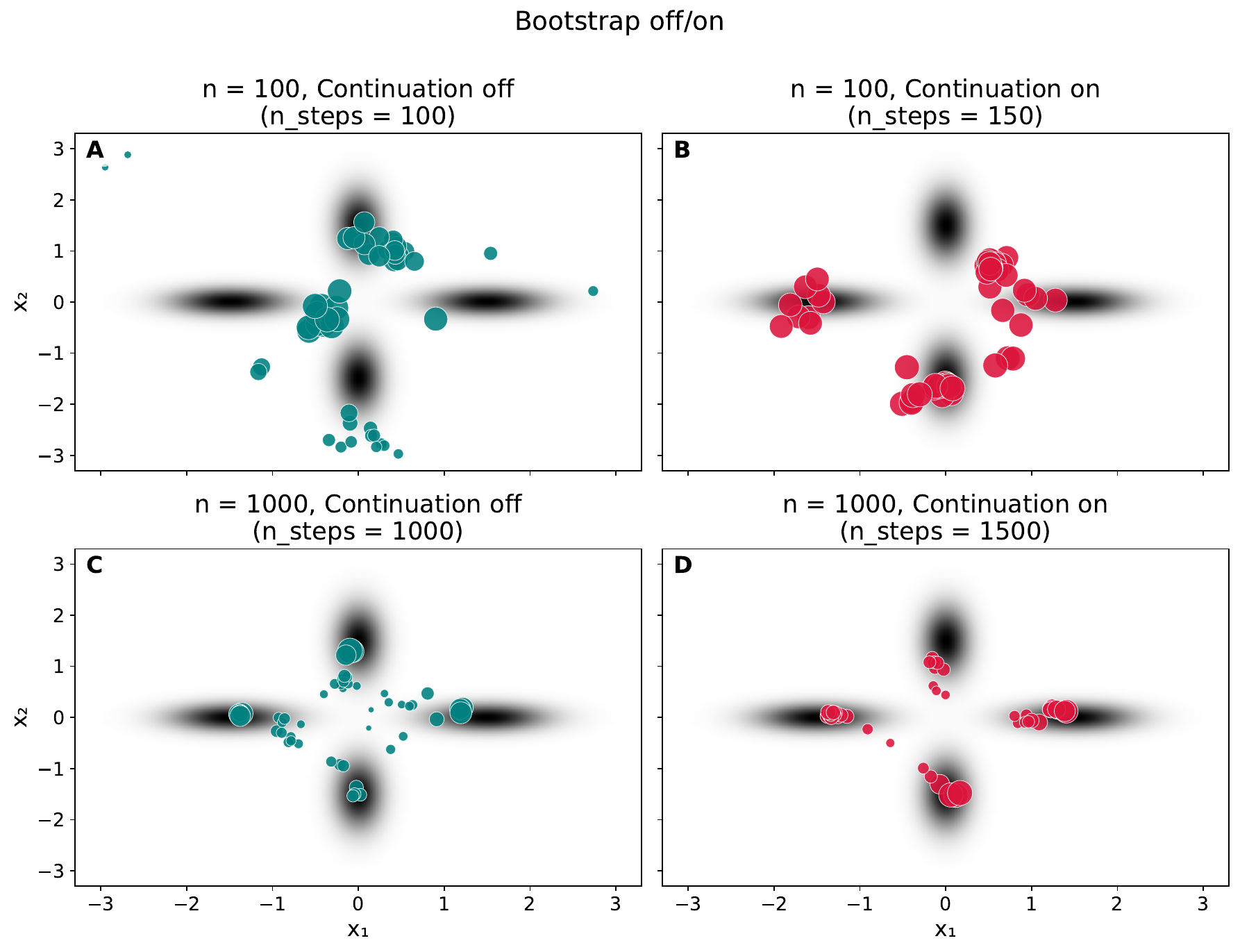}
    \caption{Mean $W_2$ distance between the empirical particle measure and the true distribution as a function of sample size $n$, for truncated (orange) and continuation (blue) bootstrap arms, with 90\% confidence bands (log scale).}
    \label{fig:bootstrap}
\end{figure}

The second question is whether activating the prior regularisation term is worth the extra Sinkhorn solve at each step. Both arms run for $n$ steps with no index continuation; the only difference is whether the Sinkhorn prior functional contributes to the Fisher-Rao gradient or is suppressed, leaving only the likelihood gradient to drive the weights. Atom-level Sinkhorn drift toward the prior remains active in both arms. Figures~\ref{fig:choice} and~\ref{fig:nxw2} tell a consistent story: the final particle configurations are visually similar at both $n = 100$ and $n = 1000$, but the $W_2$ trajectories show that prior regularisation provides a consistent convergence advantage across all sample sizes.

\begin{figure}[htbp]
    \centering
    \includegraphics[width=0.75\textwidth]{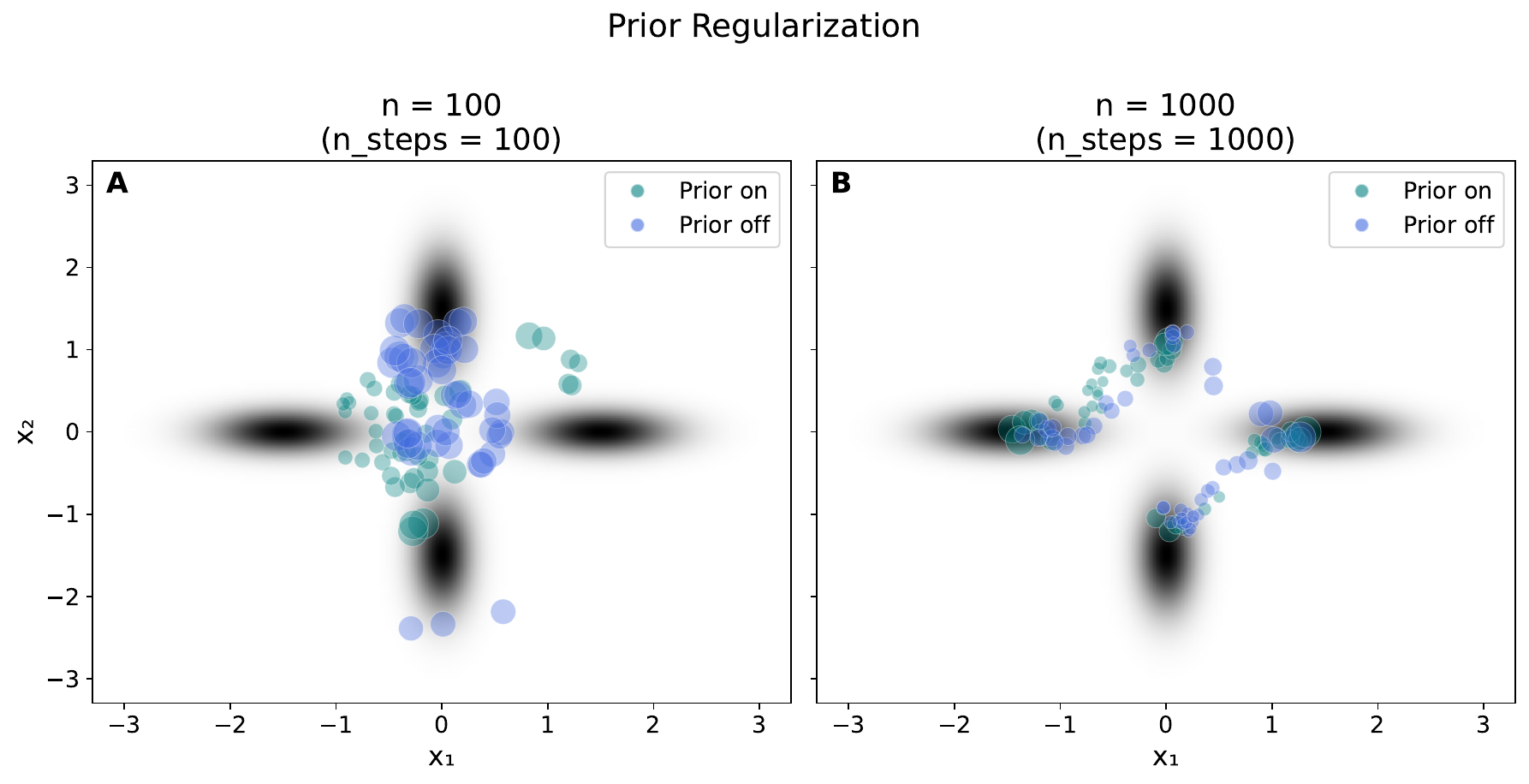}
    \caption{%
        WFR particle configurations with prior regularisation
        enabled (teal) and disabled (royal blue), at sample sizes
        $n \in \{100, 1000\}$, displayed as a $2 \times 2$ grid.
        Both arms use $n$ steps with no index continuation.
        Background: true density heatmap; marker area proportional to weight.
    }
    \label{fig:choice}
\end{figure}

\begin{figure}[ht]
    \centering
    \includegraphics[width=0.8\linewidth]{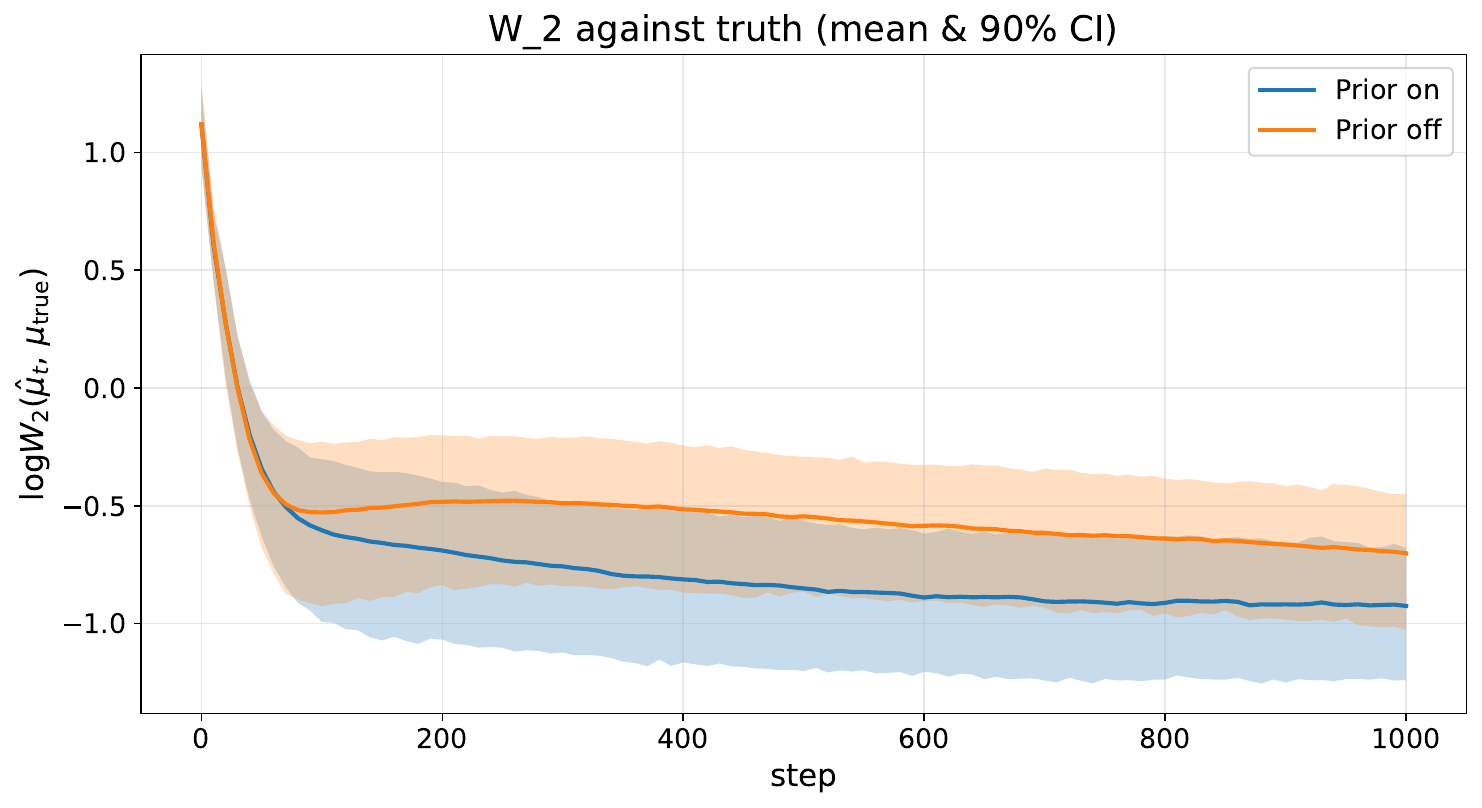}
    \caption{Mean $W_2$ distance between the empirical particle measure and the true distribution as a function of sample size $n$, with 90\% bootstrap confidence bands (log scale). Teal: prior regularisation enabled; royal blue: disabled.}
    \label{fig:nxw2}
\end{figure}

\section{Conclusion and Future Work}\label{sec:conclusion}

In this paper, we introduced an online particle inference framework for Bayesian mixture models grounded in optimal transport. By lifting sequential inference to a gradient flow on the space of probability measures, our approach continuously minimises a variational formulation of Bayes' theorem via a regularised marginal log-likelihood.

Our core contribution generalises the algorithm of \cite{newton2002nonparametric} by deriving its time-continuous limit and constructing a computationally tractable splitting scheme for the Wasserstein-Fisher-Rao gradient flow. The birth-death mechanism tunnels through low-probability regions, while Wasserstein transport provides precise local shape matching. Together, these dynamics mitigate the mode collapse that frequently affects standard MCMC and variational methods.

A current limitation is the use of a Wasserstein-over-Wasserstein discrepancy for nonparametric priors. This requires solving an optimal transport problem at every iteration, incurring significant computational overhead, and is not the most natural discrepancy between random measures. Exploring alternatives, such as the variational $f$-divergences studied by \cite{NIPS2007_72da7fd6} or the mean KL divergences exposed in \cite{ramses}, constitutes an important direction for future work on principled variational inference in infinite-dimensional spaces.

While this work establishes the foundational geometry and algorithmic structure, the framework is highly modular. In a forthcoming companion paper, we extend this methodology to address several practical challenges in applied Bayesian inference. Specifically, we adapt the flow for dependent mixture models, where the Wasserstein step transitions into a covariate-dependent Mean-Field Langevin diffusion. We also introduce repulsive mixture components to encourage cluster separation and develop mini-batching strategies for large datasets.

A further remark concerns recursive estimators of this type more broadly. Such estimators are now prevalent in post-Bayesian inference, arising naturally as conditionally identically distributed (CID) sequences---including martingale posteriors \citep{fong2023martingale} and other urn-like processes \citep{kernelbased}. Recasting these constructions as optimisation problems on the space of probability measures is a natural direction that could yield novel theoretical and practical insights.
\section*{Acknowledgments}
\noindent I wish to express my gratitude to my advisor Peter Müller and to Khai Nguyen for their guidance, valuable discussions, and constructive feedback throughout this research.

\bibliography{references}
\bibliographystyle{abbrvnat}

\end{document}